\title{Track-To-Track Association for Fusion of Dimension-Reduced Estimates\thanks{This work has been supported by the Industry
    Excellence Center LINK-SIC funded by The Swedish Governmental
    Agency for Innovation Systems (VINNOVA) and Saab AB, and by the project Scalable Kalman filters funded by the Swedish Research Council (VR). G.~Hendeby has received funding from the Center for Industrial Information Technology at Link\"{o}ping University (CENIIT) grant no. 17.12.}}
\author{\IEEEauthorblockN{%
    Robin Forsling\IEEEauthorrefmark{1}\IEEEauthorrefmark{2},
    Zoran Sjanic\IEEEauthorrefmark{1}\IEEEauthorrefmark{2},
    Fredrik Gustafsson\IEEEauthorrefmark{1}, and
    Gustaf Hendeby\IEEEauthorrefmark{1}}
  \IEEEauthorblockA{\IEEEauthorrefmark{1}
    Dept. of Electrical Engineering,
    Link\"{o}ping University,
    Link\"{o}ping, Sweden\\
    e-mail: \texttt{\url{{firstname.lastname}@liu.se}}}
  \IEEEauthorblockA{\IEEEauthorrefmark{2}
    Saab AB,
    Link\"{o}ping, Sweden\\%
    e-mail: \texttt{\url{{firstname.lastname}@saabgroup.com}}}
}
\begin{document}

\maketitle

\IEEEpubid{\begin{minipage}{\textwidth}\ \\[60pt] 
Accepted to be published in \emph{Proceedings of the 26th IEEE International Conference on Information Fusion}, Charleston, SC, USA, Jun. 2023. \\
\copyright\xspace 2023 IEEE. Personal use of this material is permitted. Permission from IEEE must be obtained for all other uses, in any current or future media, including reprinting/republishing this material for advertising or promotional purposes, creating new collective works, for resale or redistribution to servers or lists, or reuse of any copyrighted component of this work in other works.
\end{minipage}} 

\begin{abstract}
Network-centric multitarget tracking under communication constraints is considered, where dimension-reduced track estimates are exchanged. Previous work on target tracking in this subfield has focused on fusion aspects only and derived optimal ways of reducing dimensionality based on fusion performance. In this work we propose a novel problem formalization where estimates are reduced based on association performance. The problem is analyzed theoretically and problem properties are derived. The theoretical analysis leads to an optimization strategy that can be used to partly preserve association quality when reducing the dimensionality of communicated estimates. The applicability of the suggested optimization strategy is demonstrated numerically in a multitarget scenario.
\end{abstract}

\begin{IEEEkeywords}
Network-centric estimation, target tracking, track-to-track association, communication constraints, dimension-reduced estimates. 
\end{IEEEkeywords}


\section{Introduction} \label{sec:intro}

The {multitarget tracking} (MTT, \cite{vo2015wiley:mtt}) problem is a well-studied topic. Two popular classical MTT methods are the {global nearest neighbor} (GNN) tracker and the {multiple hypothesis tracker} \cite{reid1979:mht,chong2018:mht}. In the last few decades different MTT methods based on {random finite sets} have emerged that provide a solid mathematical framework for the {multitarget Bayesian filter}, see, \eg, \cite{mahler2003:taes,granstrom2018fusion:pmbm}. A key feature of all of these MTT algorithms is how they deal with the \emph{association problem} where measurements are assigned to existing tracks. Association problems also arise in \emph{network-centric} MTT where multiple agents estimate a common set of targets and the communicated tracks must be associated with local tracks. This is a \emph{track-to-track association} problem. In addition, the communication channel is a limited resource and in certain situations the exchanged data must be reduced \cite{kimura:survey-data-comp,forsling2019}, which in general have a negative impact on the association quality. 

A network-centric MTT scenario with dimension-reduced estimates is illustrated in Fig.~\ref{fig:intro-scenario}. The problem of {fusing} dimension-reduced measurements and estimates have been studied before: In \cite{zhang2003,fang2010tsp,msechu2012tsp} it is done in centralized and distributed configurations, and in \cite{forsling2022fusion,forsling23tsp-gevo,forsling23aero} it is done for decentralized sensor networks. However, all of these papers assume that the association process, see Fig.~\ref{fig:intro-system}, can be neglected or is trivially solved such that the dimension-reduction can be optimized for fusion performance only. The corresponding {association} problem---where the dimension-reduction takes data association into account---remains untreated.

\begin{figure}[tb]
	\centering
	\begin{subfigure}[b]{1.0\columnwidth}
		\centering
		\begin{tikzpicture}[scale=.23]

\def\rc{0.8}
\def\lx{2.75}
\def\ly{1.75}

\begin{footnotesize}

\def\xa{(24,0)}
\draw [rotate around={140:\xa}] \sensoroption{colora} (24,0) -- (28,3) to [out=-60,in=60] (28,-3) -- (24,0);	
\draw \agentoption{colora} \xa circle [radius = \rc];
\node at \xa {1};

\def\xa{(-4,0)}
\draw [rotate around={45:\xa}] \sensoroption{colorb} (-4,0) -- (0,3) to [out=-60,in=60] (0,-3) -- (-4,0);
\draw \comoption (-4,0) -- (3.5,0) node [right,align=center] {\scriptsize{dimension-reduced estimates}};
\draw \agentoption{colorb} \xa circle [radius = \rc];
\node at \xa {2};

\end{footnotesize}

\node at (2,10) {\mycircle{black}};
\node at (9,5) {\mysquare{black}};
\node at (16,9) {\myuptriangle{black}};

\def\xe{(2.2,11)}
\node at \xe {\mycircle{colora}};
\draw \ellipseoption{colora} \xe ellipse [x radius = \lx, y radius = \ly, rotate = -30];

\def\xe{(8.5,4.5)}
\node at \xe {\mysquare{colora}};
\draw \ellipseoption{colora} \xe ellipse [x radius = \lx, y radius = \ly, rotate = -15];

\def\xe{(17,9.5)}
\node at \xe {\myuptriangle{colora}};
\draw \ellipseoption{colora} \xe ellipse [x radius = \lx, y radius = \ly, rotate = -60];

\def\xe{(2.5,9.5)}
\node at \xe {\mycircle{colorb}};
\draw \ellipseoption{colorb} \xe ellipse [x radius = \lx, y radius = \ly, rotate = 60];

\def\xe{(9.0,5.75)}
\node at \xe {\mysquare{colorb}};
\draw \ellipseoption{colorb} \xe ellipse [x radius = \lx, y radius = \ly, rotate = 20];

\def\xe{(16.5,8.25)}
\node at \xe {\myuptriangle{colorb}};
\draw \ellipseoption{colorb} \xe ellipse [x radius = \lx, y radius = \ly, rotate = 20];
		\end{tikzpicture}	
		\caption{Agent~2 transmits dimension-reduced estimates to agent~1.}	
		\label{fig:intro-scenario}	
	\end{subfigure}
	\begin{subfigure}[b]{1.0\columnwidth}
		\centering
		\begin{tikzpicture}[scale=.59]
			\colorlet{colorbox}{colorb!40!}
\colorlet{colorscope}{myorange!110!}

\def\lineopt{[thick,-latex]}
\def\boxopt{[ultra thick,rounded corners,fill=white]}

\draw [white,opacity=0] (0,0) rectangle (1,1.25);

\begin{footnotesize}

\draw [thick,black,fill=colora!40!,rounded corners] (0,-2.5) rectangle (6.5,1);
\draw [thick,black,fill=colorb!40!,rounded corners] (-8,-2.5) rectangle (-2,1);

\draw[dashed,rounded corners,ultra thick,colorscope,fill=colorscope,fill opacity=0.3](-7.75,-0.75) rectangle (3.25,0.75);

\draw\lineopt (-3.5,0)--(0.5,0);
\draw\boxopt (-7.5,-0.5) rectangle (-2.5,0.5);
\node at (-5,0) {\bfseries dimension-reduction};

\draw\lineopt (-5,-2.0)--(-5,-0.5) node [near start,left,align=center] {agent~2 \\ estimates};

\draw\boxopt (0.5,-0.5) rectangle (3,0.5);
\node at (1.75,0) {\bfseries association};

\draw\lineopt (3,0)--(4,0);

\draw\boxopt (4,-0.5) rectangle (6,0.5);
\node at (5,0) {\bfseries fusion};

\draw\lineopt (2,-2.0)--(2,-0.5) node [near start,right,align=center] {agent~1 \\ estimates};


\end{footnotesize}
		\end{tikzpicture}	
		\caption{Schematics of the considered multitarget tracking problem.}	
		\label{fig:intro-system}
	\end{subfigure}
	\caption{A multiagent multitarget tracking scenario where agent~2 transmits dimension-reduced estimates to agent~1. The colored numerated circles in (a) represent agents. The black symbols represent targets and the corresponding colored symbols and ellipses are estimates. Before fusing received dimension-reduced estimates, agent~1 must associate these estimates with its local estimates. The scope of this paper is highlighted by the red dashed box in (b).}
	\label{fig:intro}
\end{figure}
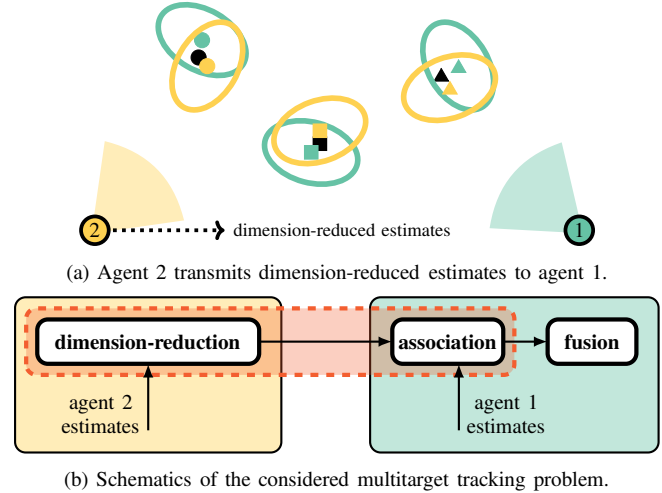

In this paper we deal with the association problem in network-centric MTT with dimension-reduced estimates. The main goal is to find a way to compute the dimension-reduction such that satisfactory association performance is obtained. This problem is formalized and the relationship to fusion optimal dimension-reduction is discussed. As a result of a problem analysis an optimization strategy is suggested for computing dimension-reductions that yield good association performance. The contributions are listed below.

\begin{itemize}
	\item We propose a novel formalization of the association problem in network-centric MTT with dimension-reduced estimates. This problem formulation essentially involves a GNN tracker and computation of the dimension-reduction such that satisfactory association quality is obtained.
	\item The proposed problem is analyzed theoretically and problem properties are derived.
	\item Based on the problem analysis we suggest an optimization algorithm for the dimension-reduction computation. 
\end{itemize}	

\section{Network-Centric Target Tracking Using Dimension-Reduced Estimates} \label{sec:problem}

In this section we introduce the studied multiagent MTT problem. The outlined estimation model forms the basis for fusion related operations and is mainly related to previous work. The provided association model is fundamental for the contributions of this paper. We give a motivating example to illustrate why association properties should be taken into account when reducing dimensionality. The considered problem is formalized at the end.

\subsection{Preliminaries} \label{sec:prel}
Let $\realsn$ and $\realsmn$ denote the set of all real-valued $n$-dimensional vectors and the set of all real-valued $m\times n$ matrices, respectively. Let $\psdsetn$ and $\pdsetn$ denote the set of all symmetric positive semidefinite $n\times n$ matrices and the set of all symmetric positive definite $n\times n$ matrices, respectively.

Targets and estimates are distinguished by subscript $(i)$, \eg, the state of the \ith target is $x\subi\in\realsn$. We use boldface to express random variables and normal face for a realization of the random variable, \eg, $y$ is a realization of $\bfy$. The expectation operator is denoted by $\EV(\cdot)$. A random variable $\bfy$ is said to be {Gaussian distributed} with mean $\mu=\EV(\bfy)$ and covariance matrix $\Sigma=\EV(\bfy-\mu)(\bfy-\mu)\trnsp$ if $\bfy\sim\calN(\mu,\Sigma)$.  


\subsection{Estimation Model} \label{sec:est-model}
We consider two agents. Let 
\begin{subequations} \label{eq:est-model}
\begin{align}
	y\ai &= x\subi + v\ai, & \bfv\ai &\sim \calN(0,R\ai), \\
	y\bi &= x\subi + v\bi, & \bfv\bi &\sim \calN(0,R\bi),
\end{align}	
\end{subequations}
be the local estimates of $x\subi$ in agent~1 and agent~2, respectively. For instance, $y\ai$ is the state estimate and $R\ai$ the corresponding covariance of the \ith target in agent~1. All cross-covariances $R_{12(i)}=\EV(\bfv\ai\bfv\bi\trnsp)$ are assumed to be zero\footnote{In network-centric MTT estimates are typically correlated to some degree. Here it is assumed that estimates have been decorrelated before they are communicated, for instance by using the techniques in \cite{tian-bar-shalom:gimf,forsling23aero}.}. A dimension-reduced estimate is given by
\begin{align}
	y\Mi &= \M\subi y\bi, & R\Mi &= \M\subi R\bi\M\subi\trnsp,
\end{align}	
where $\M\subi\in\realsmn$ with $m<n$ and $\rank(\M\subi)=m$. 

The sets of local estimates of agent~1 and agent~2 are 
\begin{subequations} \label{eq:est-set-nu}
\begin{align}
	\estset_1 &= \left\{(y\aa,R\aa),\dots,(y\aN,R\aN)\right\}, \\
	\estset_2 &= \left\{(y\ba,R\ba),\dots,(y\bN,R\bN)\right\}.
\end{align}	
\end{subequations}
Agent~1 and agent~2 track exactly the same targets and hence have the same number of tracks. Moreover, it is assumed that the elements of $\estset_1$ and $\estset_2$ are labeled according to $x\suba,\dots,x\subN$, \eg, $(y\ai,R\ai)$ and $(y\bi,R\bi)$ are estimates of the same target $x\subi$. This might sound a bit counterintuitive but the assumption is not a restriction since here the actual correct association result is assumed to be known, as described later, and the task is to compute $\M\suba,\dots,\M\subN$. We also define
\begin{equation}
	\estset_\M = \left\{(y\Ma,R\Ma),\dots,(y\MN,R\MN)\right\}.
	\label{eq:est-set-psi}
\end{equation}

Since $R_{12(i)}=0$, $(y\ai,R\ai)$ and $(y\Mi,R\Mi)$ are {mean square error} (MSE) optimally fused according to \cite{forsling2022fusion}
\begin{subequations} \label{eq:kfus}
\begin{align}
	\xhat\subi &= P\subi\left( R\ai\inv y\ai + \Mt\subi R\Mi\inv y\Mi \right), \\
	P\subi &= \left( R\ai\inv + \Mt\subi R\Mi\inv \M\subi \right)\inv.
\end{align}	
\end{subequations}
This fusion rule is denoted {Kalman fuser} (KF). For KF, a {fusion optimal}\footnote{Fusion optimal in the sense that this $\M\subi$ yields the smallest MSE when fusing $(y\ai,R\ai)$ and $(y\Mi,R\Mi)$.} $\M\subi$ is computed using Algorithm~\ref{alg:gevo-kf} \cite{forsling2022fusion}.

\begin{algorithm}[t]
	\caption{Fusion Optimal $\M\subi$}
	\label{alg:gevo-kf}
	\begin{algorithmic}[0]
		\begin{small}
		\Input $R\ai,R\bi\in\pdsetn$ and $m$ \\
		\begin{enumerate}[1:]
			\item Let $Q\subi = R\ai^2$ and $S\subii = R\ai+R\bi$. 
			\item Compute $\lambda_1\leq\dots\leq\lambda_n$ and $z_1,\dots,z_n$ as the solution to
			\begin{equation*}
				Q\subi z = \lambda S\subii z,
			\end{equation*}
			\item Compute $V=\BBM v_1&\dots&v_m\EBM$, where $v_a\trnsp v_b=\delta_{ab}$, such that $z_n,\dots,z_{n-m+1}$ span the same subspace as the columns of $V$.
			\item Compute $V\trnsp R_2V=U\Sigma U\trnsp$ and let $\M\subi=U\trnsp V\trnsp$.
		\end{enumerate}
		\Output $\M\subi$
		\end{small}
	\end{algorithmic}
\end{algorithm}

\begin{figure*}[t]
	\centering
	\begin{subfigure}[b]{1.0\columnwidth}
		\centering
		\begin{tikzpicture}[scale=.58]
			\input{fig/example-motivation.tex}
		\end{tikzpicture}	
		\caption{Target tracking example.}	
		\label{fig:ex-motiv-scenario}	
	\end{subfigure}
	\begin{subfigure}[b]{1.0\columnwidth}
		\centering
		\begin{tikzpicture}[scale=.82]
			\colorlet{colorarea}{myorange}

\newcommand{\curveoption}[1]{[line width=0.3mm,color=#1]}
\newcommand*\borderoption{[line width=0.1mm,color=colorarea,densely dashed]}
\renewcommand*\filloption{[colorarea!30!]}

\def\alo{63.438}; \def\ahi{116.562}

\def\xmin{0}; \def\xmax{190}
\def\xlbl{-12}

\begin{footnotesize}

\def\ymin{1.0}; \def\ymax{2.8}

\begin{scope}[xscale=0.035,yscale=1.4]
\fill \filloption (\alo,\ymin) rectangle (\ahi,\ymax);
\draw \supportopt{black} (90,1.2955)--(90,\ymin) node [below] {$\alpha^\star$};
\draw \coordframeopt (\xmin,\ymax)--(\xmin,\ymin)--(\xmax,\ymin) node [below] {$\alpha$};
\draw \curveoption{darkgray} plot [smooth] coordinates {( 0.0,2.545)( 5.0,2.536)(10.0,2.508)(15.0,2.462)(20.0,2.399)(25.0,2.322)(30.0,2.233)(35.0,2.134)(40.0,2.029)(45.0,1.920)(50.0,1.812)(55.0,1.707)(60.0,1.608)(65.0,1.519)(70.0,1.442)(75.0,1.379)(80.0,1.333)(85.0,1.305)(90.0,1.295)(95.0,1.305)(100.0,1.333)(105.0,1.379)(110.0,1.442)(115.0,1.519)(120.0,1.608)(125.0,1.707)(130.0,1.812)(135.0,1.920)(140.0,2.029)(145.0,2.134)(150.0,2.233)(155.0,2.322)(160.0,2.399)(165.0,2.462)(170.0,2.508)(175.0,2.536)(180.0,2.545)};
\node [above] at (180,2.545) {$\trace(P\suba)$};
\node at (90,2.5) {$J_0>J_e$};
\node at (31,1.25) {$J_0<J_e$};
\node at (149,1.25) {$J_0<J_e$};
\node [rotate=90] at (\xlbl,1.9) {Fusion loss};
\end{scope}

\def\ymin{0.0}; \def\ymax{14}

\begin{scope}[xscale=0.035,yscale=0.175,yshift=-9cm]
\fill \filloption (\alo,\ymin) rectangle (\ahi,\ymax);
\draw \coordframeopt (\xmin,\ymax)--(\xmin,\ymin)--(\xmax,\ymin) node [below] {$\alpha$};	
\draw \curveoption{darkgray} plot [smooth] coordinates {( 0.0,0.182)( 5.0,0.203)(10.0,0.264)(15.0,0.365)(20.0,0.501)(25.0,0.669)(30.0,0.864)(35.0,1.079)(40.0,1.309)(45.0,1.545)(50.0,1.782)(55.0,2.012)(60.0,2.227)(65.0,2.422)(70.0,2.590)(75.0,2.726)(80.0,2.827)(85.0,2.888)(90.0,2.909)(95.0,2.888)(100.0,2.827)(105.0,2.726)(110.0,2.590)(115.0,2.422)(120.0,2.227)(125.0,2.012)(130.0,1.782)(135.0,1.545)(140.0,1.309)(145.0,1.079)(150.0,0.864)(155.0,0.669)(160.0,0.501)(165.0,0.365)(170.0,0.264)(175.0,0.203)(180.0,0.182)};
\draw \curveoption{darkgray,dashed} plot [smooth] coordinates {( 0.0,11.818)( 5.0,11.728)(10.0,11.462)(15.0,11.027)(20.0,10.436)(25.0,9.707)(30.0,8.864)(35.0,7.930)(40.0,6.935)(45.0,5.909)(50.0,4.883)(55.0,3.888)(60.0,2.955)(65.0,2.111)(70.0,1.382)(75.0,0.792)(80.0,0.356)(85.0,0.090)(90.0,0.000)(95.0,0.090)(100.0,0.356)(105.0,0.792)(110.0,1.382)(115.0,2.111)(120.0,2.955)(125.0,3.888)(130.0,4.883)(135.0,5.909)(140.0,6.935)(145.0,7.930)(150.0,8.864)(155.0,9.707)(160.0,10.436)(165.0,11.027)(170.0,11.462)(175.0,11.728)(180.0,11.818)};
\node at (90,11.75) {$J_0>J_e$};
\node [above] at (180,0.182) {$J_0$};
\node [above] at (180,11.818) {$J_e$};
\node [rotate=90] at (\xlbl,7) {Assignment loss};
\end{scope}

\end{footnotesize}
		\end{tikzpicture}	
		\caption{Fusion and association loss functions \wrt $\alpha$. }	
		\label{fig:ex-motiv-results}	
	\end{subfigure}
	\caption{Motivating example. Two agents estimate two targets. By construction $\M\suba=\M\subb=\M$, where $\M(\alpha)=\BBM \cos\alpha & \sin\alpha \EBM$ and $\alpha\in[0\degrees,180\degrees]$. The dashed lines in (a) represent projections of the state estimates along $\M(0\degrees)$ and $\M(90\degrees)$. The effect of $\M$ on the fusion and association performance is evaluated by varying $\alpha$. The fusion loss function is $\trace(P\subi)$ and the association loss function is $\trace(\Pi\Ared)$, with $J_0$ and $J_e$ defined as the losses corresponding to correct and incorrect assignment, respectively. The fusion optimal $\M$ is given by $\alpha^\star=90\degrees$. At $\alpha^\star$, $\M y\aa=\M y\bb$ and $\M y\ab=\M\ba$ which implies $J_e=0<J_0$. }
\end{figure*}

\subsection{Association Model} \label{sec:asso-model}
The association problem is formulated as a linear assignment problem \cite{burkard2009assignment}. In case of full estimates, the assignment matrix is 
\begin{equation}
	\Afull = 
	\BBM 
		d_{(11)}^2 & \hdots & d_{(1N)}^2 \\
	 	\vdots & \ddots & \vdots \\
	 	d_{(N1)}^2 & \hdots & d_{(NN)}^2
	\EBM,
\end{equation}
where $d\subij^2$ is a {Mahalanobis distance} (MD) given by
\begin{subequations}
\begin{align}
	d\subij^2 &= \bary\subij\trnsp S\subij\inv \bary\subij, \\
	\bary\subij &= y\ai-y\bj, \\
	S\subij &= R\ai + R\bj,
\end{align}	
\end{subequations}
since $\EV(\bfv\ai\bfv\bj\trnsp)=0$. Similarly, the dimension-reduced assignment matrix $\Ared$ is defined as
\begin{equation}
	\Ared = 
	\BBM 
		r_{(11)}^2 & \hdots & r_{(1N)}^2 \\
		\vdots & \ddots & \vdots \\
		r_{(N1)}^2 & \hdots & r_{(NN)}^2  
 	\EBM,	
 	\label{eq:ass-mat-red}
\end{equation}
where $r\subij^2$ is an MD given by
\begin{align}
	r\subij^2 
	&= (\M\subj y\ai-y\Mj)\trnsp\left(\M\subj R\ai\Mt\subj+R\Mj\right)\inv \nonumber \\
	&\quad \times (\M\subj y\ai-y\Mj) \nonumber \\
	&= \bary\subij\trnsp \M\subj\trnsp \left(\M\subj S\subij\M\subj\trnsp\right)\inv \M\subj\bary\subij.
\end{align}

Agent~1 receives estimates from agent~2 and solves the association problem using the following optimization formulation. Let $\perms^N$ be the set of all $N\times N$ {permutation matrices}, \ie, 
\begin{equation}
	\perms^N = \left\{\Pi\in\reals^{N\times N} \,\left|\, [\Pi]_{ij}\in\{0,1\}, \Pi\Pi\trnsp = I  \right.\right\}.
\end{equation}
A permutation matrix $\Pi\in\perms^N$ assigns exactly one estimate in $\estset_1$ to each of the estimates in $\estset_\M$. The optimal $\Pi$ for a certain assignment matrix $\calA$ is computed using \cite{burkard2009assignment}
\begin{equation} \label{eq:asso-opt-prob}
\begin{aligned}
	& \underset{\Pi}{\minimize} & & \trace(\Pi\calA) \\
	& \subjectto & & \Pi \in \perms^N.
\end{aligned}	
\end{equation}
In this formulation correct assignment is given by $\Pi_0=I$. 

\begin{rmk}
Let $Z=[z_{ij}]$, where $z_{ij}\in\{0,1\}$. The problem in \eqref{eq:asso-opt-prob} is a matrix version of 
\begin{equation*}
\begin{aligned}
	& \underset{Z}{\minimize} & & z_{ij}[\calA]_{ij} \\
	& \subjectto & & \sum_{i} z_{ij} = 1, \quad \forall i, \\
	& & & \sum_{j} z_{ij} = 1, \quad \forall j.
\end{aligned}	
\end{equation*}
This formulation is more common in the MTT literature \cite{blackman:mts}. However, here we use \eqref{eq:asso-opt-prob}.
\end{rmk}


\subsection{Motivating Example} \label{sec:motiv-example}

We will now illustrate how the choice of $\M\subi$ affects the association performance. Consider the scenario in Fig.~\ref{fig:ex-motiv-scenario}, where $N=2$, $n=2$ and $m=1$. Each agent has a local estimate of each of the two targets as defined in Fig.~\ref{fig:ex-motiv-scenario}, where $R\aa=R\ab$ and $R\ba=R\bb$. Assume 
\begin{equation*}
	\M\suba=\M\subb=\M=\BBM\cos\alpha&\sin\alpha\EBM,
\end{equation*}
where $\alpha$ is an angle. Based on this parametrization it is possible to define $\Ared$ as a function of $\alpha$. Let 
\begin{align*}
	J_0 &= \left.\trace(\Pi\Ared)\right|_{\Pi=\Pi_0} = r_{(11)}^2+r_{(22)}^2, \\
	J_e &= \left.\trace(\Pi\Ared)\right|_{\Pi=\BBSM0&1\\1&0\EBSM} = r_{(12)}^2+r_{(21)}^2,
\end{align*}
be the cost corresponding to correct and incorrect assignment, respectively. By construction $J_0$, $J_e$ and $\trace(P\suba)=\trace(P\subb)$ are functions of $\alpha$. 

The fusion and association performance with respect to (\wrt) $\alpha$ is evaluated by computing $J_0$, $J_e$ and $\trace(P\subi)$ for each $\alpha\in[0\degrees,180\degrees]$. The results are shown in Fig.~\ref{fig:ex-motiv-results}. The fusion optimal $\M$ corresponds to $\alpha^\star=90\degrees$. However, this $\M$ lies in the interval where $J_0>J_e$ which would imply incorrect assignment. To have correct assignment in the dimension-reduced case while maintaining good fusion performance the selected $\M$ should be such that it minimizes $\trace(P\subi)$ subject to $J_0<J_e$.

\subsection{Problem Formalization}
Assume the targets $x\suba,\dots,x\subN$ are well separated such that solving the assignment problem in \eqref{eq:asso-opt-prob} with $\calA=\Afull$ yields $\Pi_0$. Moreover, assume that agent~2 has no knowledge about $\estset_1$. The problem is, at agent~2, to compute $\M\suba,\dots,\M\subN\in\reals^{1\times n}$ such that when agent~1 solves \eqref{eq:asso-opt-prob} with $\calA=\Ared$ the solution $\Pi$ is as close as possible to $\Pi_0$. In other words, since it in general is not possible to obtain correct association in the dimension-reduced case, we want to compute $\M\suba,\dots,\M\subN$ in such a way that the association is not degraded too much. The focus is on the case $m=1$. However, some of the results are given for arbitrary $m\geq1$.	

\begin{rmk}
The considered problem is not the common association problem of network-centric MTT where received tracks are associated with local tracks and correct assignment $\Pi_0$ is \emph{unknown}. Here, the correct assignment is \emph{known} by construction and hence, for the presentation, we have the freedom of defining $\Afull$ and $\Ared$ such that $\Pi_0=I$. 
\end{rmk}

\section{Problem Analysis} \label{sec:analysis}
In this section we examine properties of the considered association problem. Sufficient conditions for correct assignment are given. An example is used to show that the problem is further complicated by inherent randomness. Statistical properties of the problem are derived at the end to be used in the subsequent section.


\subsection{A Sufficient Condition for Correct Assignment}
Consider now an oracle's perspective. The example of Sec.~\ref{sec:motiv-example} illustrate an important property of the problem. That is, for $\M\subj\neq 0$ and $\bary\subij\neq0$
\begin{equation*}
	\M\subj \perp \bary\subij\trnsp \iff \M\subj\bary\subij = 0, 
\end{equation*}
where $\bary\subij=y\ai-y\bj$. From this it can be inferred that for the association we want
\begin{equation}
	\M\subj\bary\subjj = 0 \quad \wedge \quad i\neq j \implies \M\subj\bary\subij \neq 0,
	\label{eq:suff-cond}
\end{equation}
where $\wedge$ is {logical and}, since in this case $r\subjj^2=0$ and $r\subij^2>0$ if $i\neq j$. A sufficient condition for correct assignment is hence that \eqref{eq:suff-cond} holds for all $j$ as this would imply $\trace(\Ared)=0$. However, by assumption agent~2 has no knowledge about $\estset_1$ and hence without further knowledge agent~2 cannot compute $\M\subj$ such that \eqref{eq:suff-cond} is satisfied.

\subsection{Problem Properties}
In the example of the previous section the fusion optimal $\M$ gave incorrect association. Luckily, it is not generally the case that the fusion optimal $\M$ yields incorrect assignments. Unfortunately, it is impossible to say something general about tradeoffs between fusion and association performance. The main reasons for this are described below.

Consider $\Mt\subj\in\realsn$, and let $Q\subj=R\aj^2\in\pdsetn$ and $S\subjj=R\aj+R\bj\in\pdsetn$. In the fusion case the optimal $\M\subj$ solves \cite{forsling2022fusion}
\begin{equation} \label{eq:gevo-kf-m=1}
\begin{aligned}
	& \underset{\|\M\subj\|=1}{\maximize} & & \frac{\M\subj Q\subj\Mt\subj}{\M\subj S\subjj\Mt\subj}.
\end{aligned}	
\end{equation}
Hence the fusion optimal $\M\subj$ for a certain target $x\subj$ can be solved isolated from the other targets. This is not true in the association problem where optimal $\M\subj$ for a certain target $x\subj$ depends on all estimates in both $\estset_1$ and $\estset_2$ through $\Ared$.

A slightly less restrictive sufficient condition for correct assignment, \cf \eqref{eq:suff-cond}, is that for each $j$
\begin{equation}
	r\subjj^2 < r\subij^2, \quad \forall i\neq j.
	\label{eq:suff-cond-2}
\end{equation} 
If this condition holds {nearest neighbor} \cite{blackman:mts} association yields the same results as GNN association. The condition in \eqref{eq:suff-cond-2} can also be expressed as \cite{forsling2022fusion}
\begin{equation}
	\frac{\M\subj\bary\subjj\bary\subjj\trnsp\Mt\subj}{\M\subj S\subjj \Mt\subj} < \frac{\M\subj\bary\subij\bary\subij\trnsp\Mt\subj}{\M\subj S\subij \Mt\subj}, \quad \forall i\neq j,
\end{equation}
where each fraction is structurally similar to the fraction in \eqref{eq:gevo-kf-m=1}. However, a complication compared to the fusion case is that $r\subij^2$ is a realization of a random variable
\begin{equation*}
	\bfr\subij^2 = \bbfy\subij\trnsp\Mt\subj \left(\M\subj S\subij\Mt\subj \right)\inv \M\subj\bbfy\subij,
\end{equation*}
where $\bbfy\subij=\bfy\ai-\bfy\bj$. Hence, assuming that agent~2 has access to $R\ai$ and a good estimate of $x\subi$, the fusion optimal $\M\subi$ could be computed while it would still be difficult to predict $r\subij^2$ due to randomness. Fig.~\ref{fig:example-realization} shows two possible realizations of each of the random variables
\begin{align*}
	\bfy\aa &= x\suba + \bfv\aa, & \bfy\ba &= x\suba + \bfv\ba, \\
	\bfy\ab &= x\subb + \bfv\ab, & \bfy\bb &= x\subb + \bfv\bb,
\end{align*}
where $\bfv\aa,\bfv\ab\sim\calN(0,R_1)$ and $\bfv\ba,\bfv\bb\sim\calN(0,R_2)$. Since the covariances are the same in each case and since by assumption $R\aa=R\ab=R_1$ and $R\ba=R\bb=R_2$ we have that fusion optimal $\M\subj$ satisfy $\M\suba=\M\subb=\M$ in both cases. Computing $\Ared(\M)$ in realization~1 and realization~2 yields
\begin{align*}
	\calA_1 &= \BBM0.05&1.01\\0.31&0.05\EBM, & \calA_2 &= \BBM0.11&0.01\\0.01&0.11\EBM,
\end{align*}
respectively. In realization~1 we will hence have correct assignment $\Pi_0$ while in realization~2 the incorrect combination is chosen. The example illustrates that, due to the inherent randomness, it is in general impossible to decide if a fusion optimal $\M\subj$ will imply correct or incorrect assignment without knowing the actual realization.

%

\begin{figure}[tb]
	\centering
	\begin{tikzpicture}[scale=0.30]
		\input{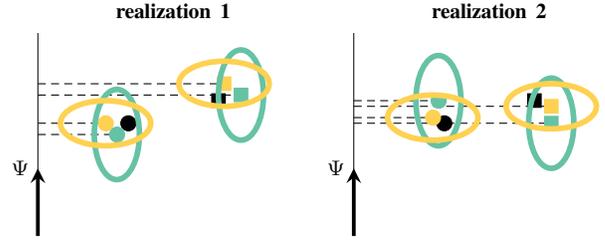}
	\end{tikzpicture}
	\caption{Two noise realizations of the same scenario. The target states and the covariances $R\aa=R\ab$ and $R\ba=R\bb$ are the same in both realizations. What differs are the state estimates $y\aa$, $y\ba$, $y\ab$ and $y\bb$. In realization~1 correct assignment is obtained while in realization~2 incorrect assignment is obtained.}
	\label{fig:example-realization}
\end{figure}

\subsection{Statistical Properties}
Assume $m\geq1$. By construction
\begin{equation*}
	\M\subj\bbfy\subij\sim \calN\left(\M\subj\barx\subij,\M\subj S\subij\Mt\subj\right),
\end{equation*} 
where $\barx\subij=x\subi-x\subj$. Hence \cite{rao73}
\begin{equation}
	\bfr\subij^2 \sim 
	\begin{cases}
		\chi_m^2, & \text{ if } i=j, \\
		\chi_{m,\nu}^2, & \text{ if } i\neq j,
	\end{cases}
\end{equation}
where $\chi_m^2$ is the central chi-squared distribution with $m$ degrees of freedom, and $\chi_{m,\nu}^2$ is the noncentral chi-squared distribution, where $\nu$ is the noncentrality parameter. The expectation value is 
\begin{align}
	\EV\left(\bfr\subij^2\right) = m + \nu\subij,
	\label{eq:r2-ev}
\end{align}
where $\nu\subij=\barx\subij\trnsp \Mt\subj \left(\M\subj S\subij\Mt\subj\right)\inv\M\subj\barx\subij$ is the noncentrality parameter. The variance is given by \cite{rao73}
\begin{align}
	\var\left(\bfr\subij^2\right) = 2m + 4\nu\subij.
	\label{eq:r2-var}
\end{align}
One conclusion is that as $\nu\subij$ increases the relative effect of randomness decreases since $\EV(\bfr\subij^2)$ scales as $\nu\subij$ while $\sqrt{\var(\bfr\subij^2)}$ only scales as $\sqrt{\nu\subij}$. This result is important and is used in the solution proposed in the next section.

\section{Preserving Correct Assignment With Dimension-Reduced Estimates} \label{sec:solution} 

In this section a method for preserving high association quality is suggested. Based on the analysis of Sec.~\ref{sec:analysis}, an optimization formulation is provided for computation of $\M\subj$. This leads to the proposed descent based optimization strategy, where a key ingredient and contribution is an adaptive step size. At the end we provide a numerical example and some comments about the optimization strategy.

\subsection{Approximated Assignment Matrix}
The proposed solution is based on the analysis of the previous section. In particular, we estimate $r\subij^2$ using $\EV(\bfr\subij^2)$ in \eqref{eq:r2-ev}. To compute $r\subij^2$ agent~2 must have access to both $(y\ai,R\ai)$ and $(y\bj,R\bj)$, but $(y\ai,R\ai)$ is unknown to agent~2. An approximation to $(y\ai,R\ai)$ which is already locally available is $(y\bi,R\bi)$. Let
\begin{equation}
	\hatr\subij^2 = \haty\subij\trnsp\Mt\subj\left(\M\subj\hatS\subij\Mt\subj\right)\inv\M\subj\haty\subij,
\end{equation}
where $\haty\subij=y\bi-y\bj$ and $\hatS\subij=R\bi+R\bj$ such that $\hbfy\subij\sim\calN(\barx\subij,\hatS\subij)$. This is consistent with $\bfr\subij^2$ in the sense that
\begin{align*}
	&\EV\left(\hbfr\subij^2\right) \\
	&\quad 
	\begin{small}
	=
	\begin{cases}
		m, & \text{ if } i=j, \\
		m + \barx\subij\trnsp\Mt\subj\left(\M\subj\hatS\subij\Mt\subj\right)\inv\M\subj\barx\subij, & \text{ if } i\neq j.
	\end{cases}
	\end{small}
\end{align*}
which is identical to \eqref{eq:r2-ev} except that $S\subij$ is replaced by $\hatS\subij$. We then define the \emph{approximated assignment matrix} as
\begin{equation}
	\Aredap = 
	\BBM 
		\hatr\subaa^2 & \hdots & \hatr\subaN^2 \\
		\vdots & \ddots & \vdots \\
		\hatr\subNa^2 & \hdots & \hatr\subNN^2
	\EBM.
	\label{eq:ass-mat-red-ap}
\end{equation}


\subsection{Proposed Solution}

Since we only have access to an approximation $\Aredap$ of $\Ared$, $\M\subj$ is computed based on the sufficient condition in Sec.~\ref{sec:analysis}, \cf \eqref{eq:suff-cond-2}. The condition is utilized because we want to have some marginal when choosing $\M\subj$, to avoid that $r\subij^2$ is zero or very small if $i\neq j$. Moreover, if $\M\subj$ satisfies this sufficient condition there is no need to take into account the other $\M\subi,i\neq j$ when computing $\M\subj$---correct assignment is obtained regardlessly. 

Consider now a certain $j$ and $\M\subj$. Let
\begin{align}
	f_i(z) &= \frac{z\trnsp \hatY\subij z}{z\trnsp \hatS\subij z}, & \hatY\subij &= \haty\subij\haty\subij\trnsp,
\end{align}
be defined for all $i\neq j$. To maximize $f_i(z)$ simultaneously for all $i$ is in general impossible since this is a \emph{multiobjective} optimization problem. However, we can consider a worst-case approach and maximize the minimum $f_i(z)$. This implies a \emph{maximin} formulation where $\M\subj$ is computed using
\begin{equation} \label{eq:maximin-prob}
\begin{aligned}
	&\underset{\M\subj}{\maximize} & & \left( \underset{i\neq j}{\min}\quad f_i(\Mt\subj) \right).
\end{aligned}
\end{equation}
The problem in \eqref{eq:maximin-prob} is a \emph{nonconvex problem} involving optimization over a finite set of quadratic form ratios. The problem is difficult to solve in general and therefore the following optimization strategy is proposed.

\subsection{Optimization Strategy}
For each individual $f_i(z)$, the $z$ that maximizes $f_i(z)$ is known to be given by the eigenvector $u$ that corresponds to the maximum eigenvalue $\lambda$ of \cite{forsling2022fusion}
\begin{equation}
	\hatY\subij u = \lambda \hatS\subij u.
\end{equation}
As $\hatY\subij\in\psdsetn$ and $\rank(\hatY\subij)=1$ this eigenvalue problem has only one strictly positive eigenvalue $\lambda$ for which the corresponding eigenvector is denoted by $\umaxij$. Since $\umaxij$ in general differ for different $i$, it is not possible to maximize all $f_i(z)$ simultaneously. However, for a certain $z$ we know the values of all $f_i(z)$ and hence are able to compute
\begin{equation}
	 \imin = \underset{i\neq j}{\argmin}\quad f_i(z).
	 \label{eq:imin}
\end{equation}
To increase $f_{\imin}$ it is suggested that  
\begin{equation}
	z \leftarrow z + \alpha \umax_{\imin},
	\label{eq:z-new}
\end{equation}
where $\alpha$ resembles the step size to traverse along $\umax_{\imin}$. Using a too large $|\alpha|$ there is a risk that $f_i$ for some other $i\neq \imin$ is severely decreased. Too small $|\alpha|$ means slow convergence. From Proposition~\ref{prop:directional-derivative} we have that a first-order approximation of $f_{i}$ evaluated at $z$ in the direction of $\alpha\umax_{\imin}$ is given by
\begin{equation}
	f_i(z+\alpha\umax_{\imin}) \approx f_i(z) + 2\alpha\frac{\umax_{\imin}\trnsp(\hatY\subij-f_i(z)\hatS\subij)z}{z\trnsp\hatS\subij z}.
	\label{eq:first-order-approx}
\end{equation}
We proceed by solving
\begin{align}
	&f_i(z) + 2\alpha\frac{\umax_{\imin}\trnsp(\hatY\subij-f_i(z)\hatS\subij)z}{z\trnsp \hatS\subij z} \nonumber \\
	&\quad = 
	f_{\imin}(z) + 2\alpha\frac{\umax_{\imin}\trnsp(\hatY_{\imin j}-f_i(z)\hatS_{\imin j})z}{z\trnsp \hatS_{\imin j} z},
\end{align}
for each $i\neq j,\imin$. This yields $N-2$ solutions for $\alpha$, where some might be negative and other positive. Since the task is to increase $f_{\imin}$ while not decreasing the other $f_i$ too much, $\alpha$ is chosen such that $|\alpha|$ is the smallest among all the ones that satisfy 
\begin{equation}
	\alpha \frac{\umax_{\imin}\trnsp(\hatY_{\imin j}-f_i(z)\hatS_{\imin j})z}{z\trnsp \hatS_{\imin j} z} > 0.
	\label{eq:alpha-equation}
\end{equation}
This last condition is introduced to ensure that the correct sign is chosen for $\alpha$. 


The operations in \eqref{eq:imin}--\eqref{eq:alpha-equation} are performed iteratively until some termination criterion is met. The optimization algorithm is summarized in Algorithm~\ref{alg:asso-psi}.

\begin{prop} \label{prop:directional-derivative}
Let $u,z\in\realsn$, $Y,S\in\realsnn$ and $f(z) = (z\trnsp Yz)/(z\trnsp Sz)$, where $z\neq0$ and $\rank(S)=n$. Then a first-order approximation of $f(z+\alpha u)$, for any scalar $\alpha$, is given by
\begin{equation}
	f(z+\alpha u) \approx f(z) + 2\alpha\frac{u\trnsp(Y-f(z)S)z}{z\trnsp Sz}.
	\nonumber
\end{equation}
\end{prop}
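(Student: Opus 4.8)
The plan is to reduce the multivariate statement to an elementary single-variable Taylor expansion. I would introduce the scalar function $g(\alpha) = f(z+\alpha u)$ and note that the phrase ``first-order approximation'' means the linearization of $g$ about $\alpha=0$, namely $g(\alpha) \approx g(0) + \alpha g'(0)$. Since $g(0) = f(z)$ trivially, the entire proposition amounts to showing that
\[
g'(0) = 2\,\frac{u\trnsp(Y-f(z)S)z}{z\trnsp Sz}.
\]
Before differentiating I would record that $g$ is a ratio of two polynomials in $\alpha$, hence differentiable wherever its denominator is nonzero; this is where the hypotheses $z\neq0$ and $\rank(S)=n$ enter, since in our setting $S$ is symmetric positive definite and therefore $z\trnsp Sz \neq 0$, so $g$ is well defined and smooth in a neighborhood of $\alpha=0$.

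The main computation is then to expand numerator and denominator as quadratics in $\alpha$. Writing $N(\alpha) = (z+\alpha u)\trnsp Y(z+\alpha u)$ and $D(\alpha) = (z+\alpha u)\trnsp S(z+\alpha u)$, I would collect terms to get $N(\alpha) = z\trnsp Yz + \alpha(u\trnsp Yz + z\trnsp Yu) + \alpha^2 u\trnsp Yu$ and similarly for $D$. Applying the quotient rule and evaluating at $\alpha=0$ gives $g'(0) = \bigl(N'(0)D(0) - N(0)D'(0)\bigr)/D(0)^2$, where $N(0)=z\trnsp Yz$, $D(0)=z\trnsp Sz$, $N'(0)=u\trnsp Yz + z\trnsp Yu$, and $D'(0)=u\trnsp Sz + z\trnsp Su$. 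The key simplifying step is to use symmetry of $Y$ and $S$ (which holds in the application, since $\hatY\subij=\haty\subij\haty\subij\trnsp$ and $\hatS\subij$ are symmetric): then $z\trnsp Yu = u\trnsp Yz$ and $z\trnsp Su = u\trnsp Sz$, so $N'(0)=2u\trnsp Yz$ and $D'(0)=2u\trnsp Sz$. Substituting, factoring out $2/(z\trnsp Sz)$, and recognizing $z\trnsp Yz/z\trnsp Sz = f(z)$ collapses the expression to the claimed form; plugging $g'(0)$ into the linearization completes the argument.

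There is no genuinely hard obstacle here, as the result is a routine directional-derivative calculation; the one point that must be handled with care rather than glossed over is the appeal to symmetry. Only the symmetric parts of $Y$ and $S$ contribute to the quadratic forms, and it is precisely this symmetry that produces the factor of $2$ and the clean combination $Y-f(z)S$ in the final expression. I would therefore state explicitly that the proposition is applied with symmetric $Y,S$ (equivalently, one replaces $Y,S$ by their symmetrizations), so that the cross terms coincide and the stated formula is exact to first order in $\alpha$.
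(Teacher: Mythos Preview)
Your proposal is correct and follows essentially the same idea as the paper: both arguments amount to a first-order Taylor expansion of $f$ at $z$ in the direction $u$. The only mechanical difference is that the paper computes the full gradient $\partial f/\partial z = 2(Y-f(z)S)z/(z\trnsp Sz)$ via a matrix-calculus identity and then contracts with $\alpha u$, whereas you reduce to the scalar function $g(\alpha)$ and apply the quotient rule directly; your route is slightly more elementary and, notably, makes explicit the symmetry assumption on $Y,S$ that the paper's cited gradient formula uses tacitly.
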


\begin{proof}
From \cite{matrix-cookbook2012} we have
\begin{align*}
	\frac{\partial f(z)}{\partial z} 
	&= -\frac{2Szz\trnsp Yz}{(z\trnsp Sz)^2} + \frac{2Yz}{z\trnsp Sz} = -\frac{2f(z)Sz}{z\trnsp Sz} + \frac{2Yz}{z\trnsp Sz} \\ 
	&= 2\frac{(Y-f(z)S)}{z\trnsp Sz}z.
\end{align*}
A first-order approximation of $f(z+\alpha u)$ is given by
\begin{align*}
	f(z+\alpha u) 
	&\approx 
	f(z) + \left.\alpha u\trnsp \frac{\partial f(z')}{\partial z'}\right\vert_{z'=z} \\
	&= f(z) + 2\alpha\frac{u\trnsp(Y-f(z)S)z}{z\trnsp Sz}.
\end{align*}
\end{proof}

\begin{algorithm}[t]
	\caption{Association Quality Based $\M\subj$}
	\label{alg:asso-psi}
	\begin{algorithmic}[0]
		\begin{small}
		\Input $\estset_2$, $j$, $\alphalow$ and $\alphahigh$ \\
		\begin{enumerate}[1:]
			\item For each $i\neq j$: Let $\hatY\subij = \haty\subij\haty\subij\trnsp$, $\hatS\subij=R\bi+R\bj$ and $f_i(z)=(z\trnsp\hatY\subij z)/(z\trnsp\hatS\subij z)$. Let $\umaxij$ be the eigenvector corresponding to the maximum eigenvalue $\lambda_i$ of $\hatY\subij u = \lambda\hatS\subij u$. 
			\item Let $k=0$ and $z_0\leftarrow z_0/\|z_0\|$, where $z_0=\sum_{i=1,i\neq j}^N \frac{1}{\lambda_i}u_i$.
			\item Let $k\leftarrow k+1$. Compute
				\begin{equation*}
					\imin = \underset{i\neq j}{\argmin}\quad f_i(z_{k-1}).
				\end{equation*}
			\item For each $i\neq j$: Define
				\begin{align*}
					&\hatf_i(z_{k-1}+\alpha u_{\imin}) \\ 
					&\quad = f_i(z_{k-1}) + 2\alpha\frac{\umax_{\imin}\trnsp(\hatY\subij-f_i(z_{k-1})\hatS\subij)z_{k-1}}{z_{k-1}\trnsp \hatS\subij z_{k-1}}.
				\end{align*}
			\item For each $i\neq j,\imin$: Solve for $\alpha$ using $\hatf_i=\hatf_{\imin}$. Store the different $\alpha$ in a vector $a$.
			\item If 
				\begin{equation*}
					\alpha \frac{\umax_{\imin}\trnsp(\hatY_{\imin j}-f_i(z_{k-1})\hatS_{\imin j})z_{k-1}}{z_{k-1}\trnsp \hatS_{\imin j} z_{k-1}} > 0,
				\end{equation*}
				then let $\alpha_k$ be given by the minimum positive element of $a$. Otherwise, let $\alpha_k$ be given by the maximum negative element of $a$. If $|\alpha_k|<\alphalow$, then let $\alpha_k\leftarrow\sign(\alpha_k)\alphalow$. If $|\alpha_k|>\alphahigh$, then let $\alpha_k\leftarrow\sign(\alpha_k)\alphahigh$.
			\item Let $z_k\leftarrow z_k/\|z_k\|$, where $z_k=z_{k-1}+\alpha_k\umax_{\imin}$.
			\item Terminate with $\M\subj=z_k\trnsp$ if a predefined stopping criterion is met. Otherwise, go back to step~3.
		\end{enumerate}
		\Output $\M\subj$
		\end{small}
	\end{algorithmic}
\end{algorithm}

\subsection{Example}
As an example of the proposed optimization strategy, consider a scenario with $N=3$ and $n=4$. Assume $j=3$. Since $N=3$ we consider two loss functions
\begin{align*}
	f_1(z) &= \frac{z\trnsp\hatY_{(13)}z}{z\trnsp\hatS_{(13)}z}, & f_2(z) &= \frac{z\trnsp\hatY_{(23)}z}{z\trnsp\hatS_{(23)}z}.
\end{align*}
The multiobjective problem of maximizing $f_1$ and $f_2$ simultaneously is not solvable, hence we will use the maximin approach and Algorithm~\ref{alg:asso-psi}. The original Algorithm~\ref{alg:asso-psi} uses an adaptive step size $\alpha\in[\alphalow,\alphahigh]$. We will compare this to the same algorithm with: (i) a small fixed step size $\alpha=\alphalow$, and (ii) a large fixed step size $\alpha=\alphahigh$.

The optimization results for the three cases, which all use the same initial vector $z_0$, are shown in Fig.~\ref{fig:opt-alg-analysis} for $\kmax=25$ iterations. In Fig.~\ref{fig:opt-alg-analysis-pareto} $f_1$ is plotted against $f_2$. The yellow dots resemble $f_1$ and $f_2$ at randomly sampled $z$. Fig.~\ref{fig:opt-alg-analysis-results} visualizes
\begin{equation*}
	\fmin = \min\, \left(f_1,f_2\right),
\end{equation*}
for each iteration $k=1,2,\dots,\kmax$. In this case the adaptive step size provides the best results. The small step size gives slow convergence while the large step oscillates as it becomes inaccurate due to the large step size. It cannot be concluded if Algorithm~\ref{alg:asso-psi} have reached a global maximum or a stationary point. 

\begin{figure}[tb]
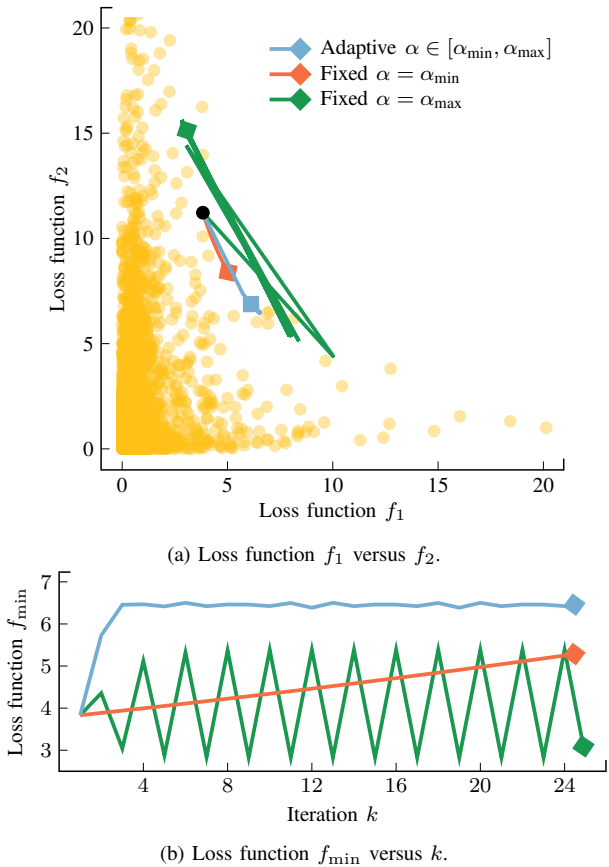

	\centering
	\begin{subfigure}[b]{1.0\columnwidth}
	\centering
		\begin{tikzpicture}[scale=.8]
			\input{fig/pareto-analysis.tex}
		\end{tikzpicture}
		\caption{Loss function $f_1$ versus $f_2$.}
		\label{fig:opt-alg-analysis-pareto}	
	\end{subfigure}
	\begin{subfigure}[b]{1.0\columnwidth}
		\centering
		\begin{tikzpicture}
			\input{fig/optimization-algorthim-analysis.tex}
		\end{tikzpicture}	
		\caption{Loss function $\fmin$ versus $k$.}		
		\label{fig:opt-alg-analysis-results}
	\end{subfigure}
	\caption{Example of the proposed optimization strategy with $N=3$ and $n=4$. Algorithm~\ref{alg:asso-psi} is compared to the same algorithm but with fixed step size. The black circle marks the common initial value $z_0$. The squares mark the final point of each case. Yellow dots resemble $f_1$ and $f_2$ evaluated at randomly generated $z$. A small step size yields slow convergence. A large step size yields inaccurate results but possibly a higher convergence rate. The adaptive step size outperforms the other two.}
	\label{fig:opt-alg-analysis}
\end{figure}

\subsection{Comments}
In essence the proposed optimization strategy in Algorithm~\ref{alg:asso-psi} is an iterative descent based optimization method, where the descent directions are chosen from a finite set of predefined directions. In this interpretation step~4--6 correspond to a backtracking line search where the step size $\alpha$ is selected. Algorithm~\ref{alg:asso-psi} takes $\alphalow>0$ as an input to avoid getting stuck at local minima, and $\alphahigh>\alphalow$ such that the linear approximation given by \eqref{eq:first-order-approx} does not become too poor. The stopping criterion used in this paper is $k>\kmax$, \ie, the algorithm terminates after $\kmax$ iterations. 

It is possible to include more sophisticated optimization techniques for better performance, but such techniques are out of the scope in this paper. It should be emphasized that there are no guarantees that Algorithm~\ref{alg:asso-psi} converges to a global maximum \wrt the problem in \eqref{eq:maximin-prob}. In fact, simulations verify that in general only local maxima are reached.



\section{Numerical Evaluation} \label{sec:evaluation}
In this section we provide a numerical evaluation of Algorithm~\ref{alg:asso-psi}. The association performance when computing $\M\subj$ using Algorithm~\ref{alg:asso-psi} is compared to the case when $\M\subj$ is computed using Algorithm~\ref{alg:gevo-kf}.

\subsection{Simulation Specification}
A target tracking scenario with $N=10$ targets is assumed. It is assumed that the dimensionality $n=6$ which we here interpret as a constant acceleration model in two spatial dimensions \cite{li-jilkov:dynamic-models}. For each target $x\subi$ a pair of covariances $R\ai$ and $R\bi$ are defined and are held fixed throughout the simulations. A \emph{Monte Carlo} (MC) approach is used, where in each MC run the state estimates $y\ai$ and $y\bi$ are sampled using $R\ai$ and $R\bi$, respectively, and the model in \eqref{eq:est-model}. We also use a \emph{scaling factor} $c$ that scales the two spatial uncertainty components. Hence, for larger $c$ the association problem becomes more difficult to solve, and for smaller $c$ the association problem becomes easier to solve. The assumed target tracking scenario is depicted in Fig.~\ref{fig:scenario} with $c=1$.

\begin{figure}[tb]
	\centering
	\begin{tikzpicture}[scale=.33]
		\newcommand*\targetoption{[black]}

\def\rtgt{0.25}

\def\xleg{13}
\def\yleg{10}
\def\sleg{1.2}
\def\lleg{1.0}

\input{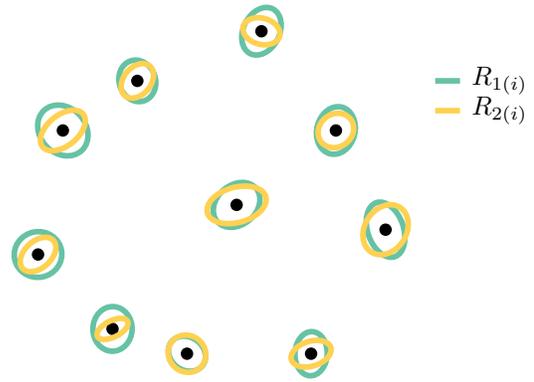}

\draw\ellipseoption{colora} ({\xleg},{\yleg})--({\xleg+\lleg},{\yleg}) node [right,black] {$R\ai$};
\draw\ellipseoption{colorb} ({\xleg},{\yleg-\sleg})--({\xleg+\lleg},{\yleg-\sleg}) node [right,black] {$R\bi$};
	\end{tikzpicture}
	\caption{Numerical scenario. The targets are represented by black dots. The ellipse around a target illustrates the uncertainty of the corresponding estimate in the two spatial dimensions. }
	\label{fig:scenario}
\end{figure}

To evaluate association performance the \emph{incorrect assignment rate} $\PIC$ is computed for a certain $c$ as the mean over all MC runs of the number of incorrect assignments divided by $N$. We compute $\PIC$ for the following cases:
\begin{itemize}
	\item $\left(\estset_1,\estset_2\right)$: The {full estimate} configuration where agent~1 receives receives $\estset_2$ from agent~2.
	\item $\left(\estset_1,\estset_\M\right)$ + $\M\subj$ using Alg.~\ref{alg:gevo-kf}: A dimension-reduced configuration where agent~1 receives $\estset_\M$ from agent~2 and $\M\subj$ is computed using Algorithm~\ref{alg:gevo-kf}. In this case it is assumed that agent~2 has access to $\estset_1$ such that fusion optimal $\M\subj$ can be computed.
	\item $\left(\estset_1,\estset_\M\right)$ + $\M\subj$ using Alg.~\ref{alg:asso-psi}: A dimension-reduced configuration where agent~1 receives $\estset_\M$ from agent~2 and $\M\subj$ is computed using the proposed optimization strategy of Algorithm~\ref{alg:asso-psi}.
\end{itemize}
The standard deviation of $\PIC$ is also computed. 

\begin{rmk}
Since agent~1 needs $\M\suba,\dots,\M\subN$ to be able to fuse the estimates in $\estset_\M$ with its local estimates, agent~2 must also include $\M\suba,\dots,\M\subN$ when transmitting	 $\estset_\M$. Functionality for encoding $\M\subj$ is described in \cite{forsling23tsp-gevo} with \matlab code available at \url{https://gitlab.com/robinforsling/dtt/}.
\end{rmk}

\subsection{Results}
The results of the numerical evaluation are visualized in Fig.~\ref{fig:results}, where $\PIC$ is plotted against $c$. For each value of $c$, $M=1\,000$ MC runs are evaluated. The quantity $\PIC$ is computed in the same realizations of $\estset_1$ and $\estset_2$ for each of the cases described previously. The shaded areas in the plot resemble 1-$\sigma$ confidence intervals.

Perfect association is maintained in the full estimate case for all values of $c$. The approach that utilizes Algorithm~\ref{alg:asso-psi} clearly outperforms the approach that computes $\M\subj$ for optimal fusion performance.

\begin{figure}[tb]
	\centering
	\begin{tikzpicture}
\def\xmin{-0.1}; \def\xmax{5.2}
\def\ymin{-0.05}; \def\ymax{1.05}

\def\xleg{0.5}
\def\yleg{1.15}
\def\sleg{0.075}
\def\lleg{0.35}

\def\xlbl{0.6}
\def\ylbl{0.1}

\def\xtick{0.1}
\def\ytick{0.025}

\begin{footnotesize}
\begin{scope}[xscale=1.4,yscale=5.6]

\foreach \y in {0,0.25,0.5,0.75,1} {
	\draw\tickoption({\xmin+\xtick},\y)--(\xmin,\y) node [left] {$\y$};
}
\foreach \x in {1,...,5} {
	\draw\tickoption(\x,{\ymin+\ytick})--(\x,\ymin) node [below] {$\x$};
}

\draw\coordframeopt (\xmin,\ymax)--(\xmin,\ymin)--(\xmax,\ymin);
\node at ({0.5*\xmin+0.5*\xmax},{\ymin-\ylbl}) {Scaling factor $c$};
\node[rotate=90] at ({\xmin-\xlbl},{0.5*\ymin+0.5*\ymax}) {Incorrect assignment rate $\PIC$};

\input{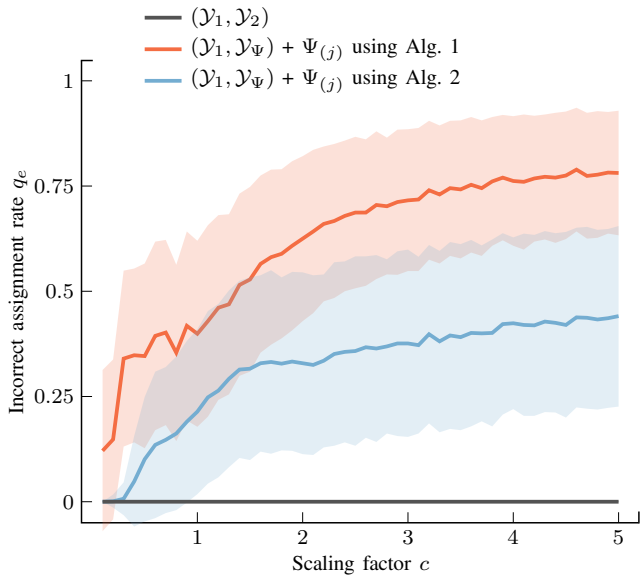}

\draw\meanoption{clrfull} ({\xleg},{\yleg})--({\xleg+\lleg},{\yleg}) node [right,black] {$\left(\estset_1,\estset_2\right)$};
\draw\meanoption{clrfus} ({\xleg},{\yleg-\sleg})--({\xleg+\lleg},{\yleg-\sleg}) node [right,black] {$\left(\estset_1,\estset_\M\right)$ + $\M\subj$ using Alg.~\ref{alg:gevo-kf}};
\draw\meanoption{clrasso} ({\xleg},{\yleg-2*\sleg})--({\xleg+\lleg},{\yleg-2*\sleg}) node [right,black] {$\left(\estset_1,\estset_\M\right)$ + $\M\subj$ using Alg.~\ref{alg:asso-psi}};

\end{scope}
\end{footnotesize}
	\end{tikzpicture}
	\caption{Results of the numerical evaluation. The incorrect assignment rate $\PIC$ is computed as a sample mean for each of the three cases for different values of $c\in[0.1,5.0]$. The shaded areas illustrates the standard deviation of $\PIC$.}
	\label{fig:results}
\end{figure}

\section{Concluding Remarks} \label{sec:conc}
The association problem for multitarget tracking in a dimension-reduced context has been proposed. In it, the track estimates to be communicated from one agent are dimension-reduced with respect to association quality in the agent that receives the dimension-reduced estimates. The implied problem was analyzed theoretically where it was illustrated that the problem is versatile and complex, and where no general solutions exists. An \emph{optimization strategy} has been suggested for computing dimension-reduced estimates while preserving \emph{association performance}. The optimization strategy was demonstrated using a numerical evaluation in which the suggested method outperformed a method that reduces dimensionality based on optimal \emph{fusion performance}.

Possible future extensions include a generalization of Algorithm~\ref{alg:asso-psi} for the $m>1$ case, and a more general configuration where agents have different sets of tracks. Another possibility is to consider a setting where there is partly knowledge available about the local estimates of the agent that receives the dimension-reduced estimates. A joint problem formulation which includes both fusion and association performance simultaneously is also of interest.

\bibliographystyle{IEEEtran}
\bibliography{bib/myrefs}

\end{document}